\newtheorem{theorem}{Theorem}
\newtheorem{lemma}{Lemma}
\newtheorem{definition}{Definition}
\newcommand{\Suffix}{\mathit{Suffix}}
\newcommand{\OSuffix}{\mathit{Suffix}_{\prec}}
\newcommand{\CST}{\mathit{CST}}
\newcommand{\PH}{\mathit{PH}}
\newcommand{\rev}[1]{#1^\mathit{R}}
\newcommand{\id}{\mathit{id}}
\newcommand{\str}{\mathit{str}}
\newcommand{\SuffixTree}{\mathit{STree}}
\newcommand{\LZTrie}{\mathit{LZ78Trie}}
\date{}
\title{
Constructing LZ78 Tries and Position Heaps in Linear Time for 
Large Alphabets
}
\author{\\
  Yuto Nakashima, Tomohiro I, Shunsuke Inenaga,\\ Hideo Bannai, and
  Masayuki Takeda\\
  {\small Department of Informatics, Kyushu University, Fukuoka
    819-0395, Japan}\\
  {\small {\tt \{yuto.nakashima,inenaga,bannai,takeda\}@inf.kyushu-u.ac.jp}}\\
  {\small {\tt tomohiro.i@cs.tu-dortmund.de}}\\
}
\begin{document}
\maketitle
\begin{abstract}
We present the first 
worst-case linear-time algorithm to compute the Lempel-Ziv 78 factorization
of a given string over an integer alphabet.
Our algorithm is based on nearest marked ancestor queries on the suffix tree 
of the given string.
We also show that the same technique can be used to construct
the position heap of a set of strings in worst-case linear time,
when the set of strings is given as a trie.
\end{abstract}

\section{Introduction}
\emph{Lempel-Ziv 78} (\emph{LZ78}, in short) is a well known compression algorithm~\cite{LZ78}.
LZ78 compresses a given text based on a dynamic dictionary
which is constructed by partitioning the input string,
the process of which is called LZ78 factorization.
Other than its obvious use for compression, the LZ78 factorization
is an important concept used in various string processing
algorithms and
applications~\cite{crochemore03:_subquad_sequen_align_algor_unres_scorin_matric,li05:_lz78_based_strin_kernel}.

In this paper, we show an LZ78 factorization algorithm
which runs in $O(n)$ time using $O(n)$ working space for an integer alphabet,
where $n$ is the length of a given string and $m$ is the size of the LZ78 factorization.
Our algorithm does not make use of any randomization such as hashing,
and works in $O(n)$ time \emph{in the worst case}.
To our knowledge, this is the first $O(n)$-time LZ78 factorization algorithm
when the size of an integer alphabet is $O(n)$
and $2^{\omega(\log n \frac{\log \log \log n}{(\log \log n)^2})}$.
Our algorithm computes the LZ78 trie (a trie representing the LZ78 factors) 
via the suffix tree~\cite{Weiner}
annotated with a semi-dynamic nearest marked ancestor data structure~\cite{westbrook_fast_incre_1992,amir_improved_dynamic_1995}.

We also show that the same idea can be used to construct 
the \emph{position heap}~\cite{ehrenfeucht_position_heaps_2011}
of a set of strings which is given as a trie,
and present an $O(\ell)$-time algorithm to construct it, 
where $\ell$ is the size of the given trie.

Some of the results of this paper appeared 
in the preliminary versions~\cite{position_heaps_of_trie_2012,BannaiIT12}.

\subsection*{Comparison to previous work}
The LZ78 trie (and hence the LZ78 factorization) of a string of length $n$
can be computed in $O(n)$ expected time and $O(m)$ space,
if hashing is used for maintaining the branching nodes of the LZ8 trie~\cite{FialaG89}.
In this paper, we focus on algorithms without randomization, 
and we are interested in the worst-case behavior of LZ78 factorization algorithms.
If balanced binary search trees are used in place of hashing,
then the LZ78 trie can be computed in $O(n \log \sigma)$ worst-case time 
and $O(m)$ working space.
Our $O(n)$-time algorithm is faster than this method
when $\sigma \in \omega(1)$ and $\sigma \in O(n)$.
On the other hand, our algorithm uses $O(n)$ working space,
which can be larger than $O(m)$ when the string is LZ8 compressible.
Jansson et al.~\cite{jansson13:_linked_dynam_tries_applic_lz}
proposed an algorithm which computes the LZ78 trie of a given string 
in $O(n(\log \log n)^2/(\log_{\sigma} n \log \log \log n))$ worst-case time, 
using $O(n(\log\sigma + \log\log_{\sigma} n)/ \log_{\sigma} n)$
bits of working space.
Our $O(n)$-time algorithm is faster than theirs
when $\sigma \in 2^{\omega(\log n \frac{\log \log \log n}{(\log \log n)^2})}$ and $\sigma \in O(n)$,
and is as space-efficient as theirs 
when $\sigma \in \Theta(n)$.
Tamakoshi et al.~\cite{TamakoshiIIBT13} proposed an algorithm 
which computes the LZ78 trie in $O(n + (s+m) \log \sigma)$ worst-case time 
and $O(m)$ working space,
where $s$ is the size of the run length encoding (RLE) of a given string.
Our $O(n)$-time algorithm is faster than theirs
when $\sigma \in 2^{\omega (\frac{n}{s+m})}$ and $\sigma \in O(n)$.

The position heap of a single string of 
length $n$ over an alphabet of size $\sigma$ can be computed in 
$O(n \log \sigma)$ worst-case time and $O(n)$ space~\cite{ehrenfeucht_position_heaps_2011},
if the branches in the position heap are maintained by balanced binary search trees.
Independently of this present work, Gagie et al.~\cite{travis_position_heaps_2013} showed 
that the position heap of a given string of length $n$ over an integer alphabet 
can be computed in $O(n)$ time and $O(n)$ space, via the suffix tree of the string.

\section{Preliminaries}

\subsection{Notations on strings}

We consider a string $w$ of length $n$ 
over integer alphabet $\Sigma = \{1, \ldots, \sigma\}$, where $\sigma \in O(n)$.
The length of $w$ is denoted by $|w|$, namely, $|w| = n$.
The empty string $\varepsilon$ is a string of length 0,
namely, $|\varepsilon| = 0$.
For a string $w = xyz$, $x$, $y$ and $z$ are called
a \emph{prefix}, \emph{substring}, and \emph{suffix} of $w$, respectively.
The set of suffixes of a string $w$ is denoted by $\Suffix(w)$.
The $i$-th character of a string $w$ is denoted by 
$w[i]$ for $1 \leq i \leq n$,
and the substring of a string $w$ that begins at position $i$ and
ends at position $j$ is denoted by $w[i..j]$ for $1 \leq i \leq j \leq n$.
For convenience, let $w[i..j] = \varepsilon$ if $j < i$.
For any string $w$, let $\rev{w}$ denote the reversed string of $w$,
i.e., $\rev{w} = w[n]w[n-1] \cdots w[1]$.

\subsection{Suffix Trees}
We give the definition of a very important and well known string index
structure, the suffix tree.
To assure property~\ref{def:suffixtreeleaf} below for the sake of presentation,
we assume that string $w$ ends with a unique character
that does not occur elsewhere in $w$.
\begin{definition}[Suffix Trees~\cite{Weiner}]
  For any string $w$, its suffix tree, denoted 
  $\SuffixTree(w)$, is a labeled rooted tree which satisfies the following:
  \begin{enumerate}
   \item each edge is labeled with a non-empty substring of $w$;
   \item each internal node has at least two children; 
   \item
     the labels $x$ and $y$ of any two distinct out-going edges from the same node begin
     with different symbols in $\Sigma$;
   \item\label{def:suffixtreeleaf} there is a one-to-one correspondence between 
   the suffixes of $w$ and the leaves of $\SuffixTree(w)$, i.e.,
   every suffix is spelled out by a unique path from the root to a leaf.
  \end{enumerate}
\end{definition}

Since any substring of $w$ is a prefix of some suffix of $w$,
all substrings of $w$ can be represented as a path from the root in $\SuffixTree(w)$.
For any node $v$, let $\str(v)$ denote the string 
which is a concatenation of the edge labels from the root to $v$.
A locus of a substring $x$ of $w$ in $\SuffixTree(w)$ is a pair $(v, \gamma)$ 
of a node $v$ and a (possibly empty) string $\gamma$, 
such that $\str(v)\gamma = x$ and $\gamma$ is the shortest.
A locus is said to be an explicit node if $\gamma = \varepsilon$,
and is said to be an implicit node otherwise.
It is well known that $\SuffixTree(w)$ can be represented with $O(n)$ space,
by representing each edge label $x$ with a pair $(i, j)$ of positions
satisfying $x = w[i..j]$.

\begin{theorem}[\cite{farach97:_optim_suffix_tree_const_large_alphab}]
Given a string $w$ of length $n$ over an integer alphabet,
$\SuffixTree(w)$ can be computed in $O(n)$ time.
\end{theorem}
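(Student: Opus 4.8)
The plan is to prove the bound by a divide-and-conquer recursion on the positions of $w$, split according to parity, so that the running time obeys $T(n) = T(n/2) + O(n)$, which solves to $T(n) = O(n)$. Crucially there will be only \emph{one} recursive call at each level: from its output I will derive two compacted tries — one for the suffixes beginning at odd positions and one for those beginning at even positions — and then merge them into $\SuffixTree(w)$. All the sorting is done by radix sort, available in $O(n)$ time because $\sigma \in O(n)$, and I will assume a standard lowest common ancestor (LCA) structure answering queries in $O(1)$ after $O(n)$ preprocessing on a static tree, so that the string depth of the LCA of two leaves (i.e.\ the longest common prefix of the corresponding suffixes) is available in constant time.

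First I would build the \emph{odd tree} $T_o$, the compacted trie of the suffixes $w[2i-1..n]$. To do this, radix-sort the pairs $(w[2i-1], w[2i])$ and relabel each pair by its rank, obtaining a string $w'$ of length $\lceil n/2\rceil$ over an alphabet of size $O(n)$; recursively computing $\SuffixTree(w')$ and re-expanding each merged symbol into its two original characters yields $T_o$ in $O(n)$ additional time. Next I would build the \emph{even tree} $T_e$ from $T_o$ \emph{without} a second recursive call: since $w[2i..n] = w[2i]\cdot w[2i+1..n]$ with $w[2i+1..n]$ an odd suffix, the lexicographic order of the even suffixes is obtained by radix-sorting the pairs $(w[2i], r_{2i+1})$, where $r_{2i+1}$ is the rank of the odd suffix starting at $2i+1$, read from the leaf order of $T_o$. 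The longest common prefix of two adjacent even suffixes is $0$ when their first characters differ and otherwise $1$ plus the string depth of the LCA in $T_o$ of the two corresponding odd suffixes; from the sorted order together with these adjacent $\mathrm{lcp}$ values the compacted trie $T_e$ is assembled in $O(n)$ time by the standard suffix-array-plus-$\mathrm{lcp}$ construction.

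The crux, and where I expect the main obstacle, is merging $T_o$ and $T_e$ into $\SuffixTree(w)$ in $O(n)$ time. A direct merge that compares edge labels character by character can cost $\Theta(n\cdot \mathrm{depth})$, which is too slow. Instead I would perform an \emph{over-merge}: a coupled depth-first traversal that fuses two out-going edges whenever their labels share the same first character, optimistically assuming the rest of the overlap agrees and advancing along the shorter edge. This produces a tree $M$ in $O(n)$ time, but some fused nodes are incorrect, because two edges may agree only on a proper prefix of the overlap, so $M$ must be repaired into the true tree.

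To repair $M$ I would compute, for each over-merged node, the true longest common prefix $\mathrm{lcp}(i,j)$ of the representative odd suffix $w[i..n]$ and even suffix $w[j..n]$ fused there, and split the node at that depth. The key identity is $\mathrm{lcp}(i,j) = 1 + \mathrm{lcp}(i+1,j+1)$ when $w[i]=w[j]$, and $0$ otherwise; this links the unknown value at one over-merged node to that at another, organizing all the queries into chains under the shift $(i,j)\mapsto(i+1,j+1)$. Each chain terminates either at an index where the two characters differ or at a pair of suffixes that already diverge in $M$ (where the value is read off directly), so resolving every chain costs time proportional to its length, and the total over all chains is $O(n)$ by amortization, since each position index is visited $O(1)$ times. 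Combining the odd-tree recursion, the derived even tree, and this linear-time over-merge-and-split gives $T(n) = T(n/2) + O(n) = O(n)$, as claimed. The delicate points to check carefully are the correctness of the over-merge together with the splitting rule, and that the chain-charging argument is genuinely linear overall rather than merely per node.
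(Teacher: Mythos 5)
The paper does not prove this theorem itself; it imports it verbatim from Farach's 1997 paper, and your proposal is precisely a reconstruction of that cited algorithm: the odd/even divide-and-conquer with a single half-length recursive call, the even tree derived from the odd tree via radix sorting and constant-time LCA/string-depth queries, and the over-merge of $T_o$ and $T_e$ repaired through the $\mathrm{lcp}(i,j) = 1 + \mathrm{lcp}(i+1,j+1)$ chains. So your approach coincides with the proof the paper relies on, and your outline is correct, including your identification of the over-merge repair and its amortized charging argument as the genuinely delicate step.
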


\subsection{Suffix Trees of multiple strings}

A {\em generalized} suffix tree of a set of strings
is the suffix tree that contains all suffixes of all the strings in the set.
Generalized suffix trees for a set $W = \{w_1\$, \ldots, w_k\$\}$ 
of strings over an integer alphabet
can be constructed in linear time in the total length of the strings.

Suppose that the set $W$ of strings is given as a \emph{reversed trie}
called a common-suffix trie, which is defined as follows.

\begin{definition}[Common-suffix tries~\cite{breslauer_suffix_tree_tree_1998}]
The common-suffix trie of a set $W$ of strings, denoted $\CST(W)$,
is a reversed trie such that 
 \begin{enumerate}
  \item each edge is labeled with a character in $\Sigma$;
  \item any two in-coming edges of any node are labeled with distinct characters;
  \item each node $v$ represents the string
        that is a concatenation of the edge labels in the path from $v$ to the root;
  \item for each string $w \in W$ there exists a unique leaf which represents $w$.
 \end{enumerate}
\end{definition}

\begin{figure}[tb]
\centering{
\includegraphics[scale=0.5]{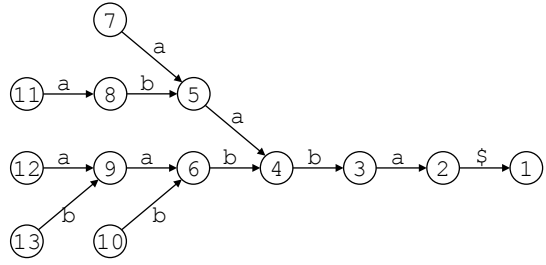}
}
\caption{
  $\CST(W)$ for
  $W = \{\mathtt{aaba\$},$ 
  $\mathtt{bbba\$},$
  $\mathtt{ababa\$},$
  $\mathtt{aabba\$},$
  $\mathtt{babba\$}\}$.
  Each node $u$ is associated with $\id(u)$.}
\label{fig:CST}
\end{figure}

An example of $\CST(W)$ is illustrated in Figure~\ref{fig:CST}.

Let $\ell$ be the number of nodes in $\CST(W)$,
and let $\Suffix(W)$ be the set of suffixes of the strings in $W$,
i.e., $\Suffix(W) = \bigcup_{w \in W} \Suffix(w)$.
Clearly, $\ell$ equals to the cardinality of $\Suffix(W)$ (including the empty string).
Hence, $\CST(W)$ is a natural representation of the set $\Suffix(W)$.
If $L$ is the total length of strings in $W$,
then $\ell \leq L+1$ holds. 
On the other hand, when the strings in $W$ share many suffixes,
then $L = \Theta(\ell^2)$
(e.g., consider the set of strings $\{ab^i \mid 1 \leq i \leq \ell\}$).
Therefore, $\CST(W)$ can be regarded as 
a compact representation of the set $W$ of strings.

\begin{definition}[Suffix Trees for $\CST(W)$]
  For any $\CST(W)$, its suffix tree, denoted $\SuffixTree(W)$,
  is a labeled rooted tree which satisfies the following:
  \begin{enumerate}
   \item each edge is labeled with a non-empty string 
         which is a concatenation of the edge labels of $\CST(W)$;
   \item each internal node has at least two children; 
   \item
     the labels $x$ and $y$ of any two distinct out-going edges from the same node begin
     with different symbols in $\Sigma$;
   \item there is a one-to-one correspondence between 
   the internal nodes of $\CST(W)$ and the leaves of $\SuffixTree(W)$, i.e.,
   every string which is represented by a node in $\CST(W)$ 
   is spelled out by a unique path from the root to a leaf.
  \end{enumerate}
\end{definition}

Notice that the suffix tree for $\CST(W)$ is identical to a generalized suffix tree
of the set $W$ of strings. 
If a given $\CST(W)$ is of size $\ell$, 
then the size of the suffix tree of $\CST(W)$ is $O(\ell)$.

We will use the following result in our algorithms.
\begin{theorem}[\cite{Shibuya_construct_stree_of_tree}]
Given $\CST(W)$ of size $\ell$ for a set $W$ of strings over an integer alphabet,
the generalized suffix tree of $W$ can be computed in $O(\ell)$ time.
\end{theorem}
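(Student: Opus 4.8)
The plan is to generalize Weiner's incremental suffix-tree construction~\cite{Weiner} from a single string to the reversed-trie setting of $\CST(W)$, and then to eliminate the dependence on $\sigma$ by a Farach-style recursion.

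The starting observation is a correspondence between the two structures. If $v$ is a node of $\CST(W)$ whose edge to $\parent(v)$ is labeled $a$, then by the definition of a common-suffix trie $\str(v) = a\cdot\str(\parent(v))$; hence each node encodes a string obtained by prepending a single character to the string of its parent, and the collection of all these strings is exactly $\Suffix(W)$. Visiting the nodes of $\CST(W)$ in breadth-first order therefore enumerates $\Suffix(W)$ in nondecreasing length, and every string is produced by prepending one character to a string visited earlier. This is precisely the order and the elementary operation on which Weiner's algorithm rests, where the suffixes of $w$ are inserted shortest-first, each obtained by prepending a character to its predecessor. I would thus maintain a partial generalized suffix tree $T$, together with a pointer from every already-visited node $u$ of $\CST(W)$ to the leaf of $T$ spelling $\str(u)$, and insert the nodes one at a time in BFS order.

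Each insertion is the Weiner step transported to the tree. To insert $\str(v) = a\cdot\str(\parent(v))$, I would start from the leaf of $T$ recorded for $\parent(v)$ and walk toward the root until reaching the nearest ancestor that carries an $a$-indexed reverse-suffix (Weiner) link; following that link leads to the locus of the longest prefix of $\str(v)$ already present in $T$, where I create the new leaf --- splitting an edge into a fresh branching node when the locus is implicit --- and then install the new suffix and Weiner links. The invariant that $T$ is at all times the generalized suffix tree of the strings inserted so far gives correctness, and the total length of the root-ward walks telescopes to $O(\ell)$ by the amortized (potential-on-depth) argument of Weiner's analysis, which extends from a path to a tree as shown by Breslauer~\cite{breslauer_suffix_tree_tree_1998}.

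The hard part is the integer alphabet. Weiner's per-node, per-character indicator and link vectors use $\Theta(\sigma)$ space and force each branching decision to cost $\Omega(\log\sigma)$ time once balanced search trees replace them, which would only give an $O(\ell\log\sigma)$ bound; this is the very same obstruction that arises for $\SuffixTree(w)$ of a single string and that Farach's construction~\cite{farach97:_optim_suffix_tree_const_large_alphab} overcomes. I would therefore mirror Farach's recursive scheme on $\CST(W)$: partition the trie nodes according to the parity of their depth, recursively construct the suffix tree of the even sub-collection, derive the odd one from it, and merge the two, performing all character comparisons and edge relabelings in batches by radix-sorting the $O(\ell)$ edge labels over $\{1,\ldots,\sigma\}$ so that navigation and over-merging run in amortized constant time. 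Verifying that the odd/even split and the merge respect the branching structure of a trie rather than of a single string, and that the recursion still telescopes to $O(\ell)$, is the main technical burden and is precisely the content of the cited result~\cite{Shibuya_construct_stree_of_tree}.
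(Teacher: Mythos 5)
First, a point of reference: the paper does not prove this statement at all --- it is imported verbatim from Shibuya's work, and the paper's entire justification is the citation. So your proposal can only be judged against the cited algorithm itself. At the level of strategy your reconstruction is accurate: inserting the nodes of $\CST(W)$ in BFS order with Weiner-type links is exactly Breslauer's $O(\ell\log\sigma)$ algorithm, and removing the $\log\sigma$ factor by transplanting Farach's odd/even recursion to the trie setting is indeed what the cited work does. The problem is that, as a proof, your argument is circular: your final sentence hands the ``main technical burden'' --- that the parity split and the merge are sound for tries, and that the recursion telescopes --- to the very result being proved. An argument that ends with ``the hard part is precisely the content of the cited result'' has the same epistemic status as the paper's bare citation; it is a plan, not a proof.

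Moreover, the one quantitative claim you do assert, that ``the recursion still telescopes to $O(\ell)$,'' fails for the recursion as you set it up. Farach's recursion halves because a string of length $n$ has exactly $\lceil n/2\rceil$ odd and $\lfloor n/2\rfloor$ even positions; a trie has no such balance between its depth-parity classes. Concretely, take a path of length $2^j-1$ below the root and attach $k\approx\ell/2$ leaves to its bottom node (with pairwise distinct edge labels, which an integer alphabet of size $O(\ell)$ permits), so that every leaf sits at even depth $2^j$. After contracting character pairs, the contracted trie has the same shape with the leaves at even depth $2^{j-1}$, and so on: recursing on the even class keeps at least $\ell/2$ nodes alive for $\Theta(\log\ell)$ levels of recursion, giving total work $\Theta(\ell\log\ell)$, not $O(\ell)$. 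To recover linearity you must recurse on the \emph{smaller} parity class (or restructure the recursion in some other way), and then face the complication your sketch avoids: the strings of odd-depth nodes have odd length, so pairing their characters leaves a dangling half-pair near the root, and both the derivation of the second suffix tree and the merge step must be reworked around it. Coping with exactly this imbalance is what separates the cited construction from a routine transcription of Farach, and it is the step your proposal needed to supply rather than defer.
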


\subsection{Tools on trees}
We will use the following efficient data structures on rooted trees.

\begin{lemma}[Nearest marked ancestor~\cite{westbrook_fast_incre_1992,amir_improved_dynamic_1995}] \label{lem:nma}
A semi-dynamic rooted tree can be maintained in linear space
so that the following operations are supported in amortized $O(1)$ time:
1) find the nearest marked ancestor of any node;
2) insert an unmarked node;
3)  mark an unmarked node.
\end{lemma}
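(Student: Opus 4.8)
The plan is to reduce the problem to a restricted form of disjoint-set union and then remove the inverse-Ackermann factor by a word-level micro/macro decomposition. The guiding observation is that, under marking, the nearest marked ancestor of a fixed node only moves \emph{toward} that node: once a node between $v$ and the root is marked it stays marked, so $\nma(v)$ is monotone in depth over the execution. This suggests maintaining a partition of the current nodes into groups, each group being a maximal set of nodes sharing the same nearest marked ancestor, with the group's representative storing that ancestor; a query is then a single $\mathrm{find}$. The difficulty is that \emph{marking a node splits a group}, and splitting is exactly the operation that ordinary union-find does not support efficiently; so I would avoid realizing the splits globally and instead localize them.

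First I would decompose the tree into \emph{microtrees}, each a connected piece of size at most $b = \Theta(\log n)$, obtained by a greedy bottom-up clustering. Within a single microtree I would store the set of currently marked nodes as a bitmask packed into $O(1)$ machine words, together with an encoding of the microtree's local topology. A within-microtree $\nma$ query then reduces to indexing a precomputed table by the triple (local topology, marked bitmask, query node), which answers in $O(1)$ time; marking a node inside a microtree is a single bit operation, also $O(1)$. Choosing $b = \varepsilon \log n$ for a small constant $\varepsilon$ keeps the total table size $2^{O(b)}\cdot\mathrm{poly}(b) = o(n)$, so construction and space stay linear.

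Second, for queries whose answer lies above the microtree containing $v$, I would contract each microtree to one node, yielding a \emph{macrotree} on $O(n/\log n)$ nodes. A query climbs from the top connector of $v$'s microtree to the nearest macro-node that still owns a marked node, which I would maintain with a union-find structure whose unions are constrained to follow the fixed macrotree (as microtrees above become free of relevant marks). Because the macrotree has only $O(n/\log n)$ nodes, even a standard $O(\alpha)$ union-find contributes $O(n)$ total; invoking the linear-time special case of set union for unions that follow a tree removes the $\alpha$ factor outright and gives amortized $O(1)$.

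The step I expect to be the main obstacle is \emph{insertion}. Since nodes are added online, the decomposition must be dynamized: inserting an unmarked node grows a microtree, and when it exceeds $2b$ nodes I would split it and create a fresh macro-node, charging the $O(b)$ re-encoding to the $\Omega(b)$ insertions that caused the growth, so the amortized cost stays $O(1)$. The delicate points are keeping the micro/macro interface consistent as the macrotree itself grows, ensuring the precomputed tables remain valid under the changing local topologies, and verifying that the macro-level unions continue to respect the fixed-tree restriction required by the linear-time union-find analysis; these amortized arguments, rather than any single operation, are where the real work lies.
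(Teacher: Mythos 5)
You should first be aware that the paper does not prove Lemma~\ref{lem:nma} at all: it is imported as a black box from \cite{westbrook_fast_incre_1992,amir_improved_dynamic_1995}, so your attempt can only be measured against those constructions and against correctness in its own right. Your instinct about the shape of the solution is sound: in its split--find formulation this problem has an $\Omega(\alpha)$ amortized lower bound on pointer machines, so any $O(1)$ amortized bound must be a word-RAM result, and a micro/macro decomposition with table lookup on $\Theta(\log n)$-size pieces is exactly the right kind of tool. Your microtree layer (mark bitmasks, topology-indexed tables, amortized re-encoding under insertions) is plausible as far as it goes.

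The fatal gap is at the macro level, and it is precisely the obstruction you named in your own first paragraph and then did not resolve. Marks are never removed in the semi-dynamic setting, so a microtree never ``becomes free of relevant marks''; contracted macro-nodes only ever \emph{gain} marked status, and the partition of macro-nodes according to ``nearest mark-containing macro-ancestor'' only \emph{refines} over time. Union--find --- including the Gabow--Tarjan linear-time special case for unions that follow a fixed tree --- maintains a partition that \emph{coarsens}: it supports merging sets, never splitting them. So the problem you are left with on the macrotree is not a union problem at all; it is again an instance of incremental nearest-marked-ancestor (equivalently, split--find) on $O(n/\log n)$ elements, i.e., the very problem being solved, and invoking static-tree union--find there is a category error, not a reduction. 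Time reversal would indeed turn the splits into unions, but that requires the whole operation sequence to be known offline, whereas the lemma demands online operations; and even setting that aside, the ``union tree known in advance'' hypothesis of Gabow--Tarjan fails because your macrotree grows online as nodes are inserted and oversized microtrees are split. To close the gap you need a macro-level structure that genuinely supports the \emph{split} direction in $O(1)$ amortized time; a standard route is to map the tree to its Euler tour, so that $\nma(v)$ becomes an innermost-enclosing-marked-interval query and marking becomes an interval split, and then to handle the macro level by exploiting that only $O(n/\log n)$ macro-level mark events can ever occur (e.g., by recursing the word-packed decomposition), rather than by any union-based argument.
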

By ``semi-dynamic" above we mean that no nodes are to be deleted from the tree.

\begin{lemma}[Level ancestor query~\cite{BerkmanV94,BenderF04}] \label{lem:laq}
Given a static rooted tree,
we can preprocess the tree in linear time and space 
so that the $i$th node in the path
from any node to the root can be found in $O(1)$ time
for any integer $i \geq 0$, if such exists.
\end{lemma}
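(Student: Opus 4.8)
The statement asks for a data structure answering level-ancestor queries $\mathrm{LA}(v,i)$, returning the $i$-th node on the path from $v$ to the root (equivalently, the ancestor of $v$ at depth $\mathrm{depth}(v)-i$), in $O(1)$ time after $O(n)$-time, $O(n)$-space preprocessing on a static rooted tree. The plan is to assemble three ingredients in the style of Bender and Farach-Colton: a long-path (ladder) decomposition, binary jump pointers, and a Four-Russians tabulation of small subtrees. The first two are individually easy but trade space against query time, and the role of the third is to reconcile them into the claimed $O(n)/O(1)$ bounds.

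First I would build the \emph{ladder decomposition}. Decompose the tree into maximal \emph{long paths}, where from each node one descends to the child of maximum height; then extend every long path $P$ upward toward the root by $|P|$ additional ancestors to form a \emph{ladder} of length $2|P|$, stored as an array. The total length of all ladders is $O(n)$, so this costs $O(n)$ time and space. The key property is that the ladder containing $v$ reaches at least $\mathrm{height}(v)$ ancestors above $v$: the long path through $v$ descends exactly $\mathrm{height}(v)$ steps to its bottom leaf, so the upward extension alone already covers $\ge \mathrm{height}(v)$ ancestors. Consequently any query with $i \le \mathrm{height}(v)$ is answered by a single array lookup, and a short computation shows that jumping to the top of $v$'s ladder lands on an ancestor of height at least $2\,\mathrm{height}(v)$. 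Iterating, $O(\log n)$ ladder jumps answer any query, giving a preliminary $\langle O(n),\,O(\log n)\rangle$ solution.

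To remove the $\log n$ factor I would add \emph{jump pointers}: for each node, a pointer to its ancestor at distance $2^0,2^1,2^2,\dots$ Given $\mathrm{LA}(v,i)$, let $2^k$ be the largest power of two not exceeding $i$, follow the jump pointer to the ancestor $w$ at distance $2^k$, and finish with one ladder lookup. This is correct in $O(1)$ time because $w$ has the descendant $v$ at distance $2^k$, so $\mathrm{height}(w)\ge 2^k > i-2^k$, and the remaining hop of length $i-2^k$ therefore stays inside $w$'s ladder. The snag is space: storing jump pointers at every node needs $O(n\log n)$.

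The hard part, and the crux of the $O(n)$-space bound, is to retain $O(1)$ queries while storing jump pointers only where they are genuinely needed, via the \emph{macro--micro decomposition}. Carve off maximal subtrees of size below $\tfrac14\log n$ as \emph{micro trees}; these are pairwise disjoint and have total size $O(n)$, and every node of subtree size at least $\tfrac14\log n$ is a \emph{macro} node. The macro nodes form an ancestor-closed subtree with only $O(n/\log n)$ leaves, since distinct macro leaves have disjoint subtrees each of size $\ge\tfrac14\log n$; this is what caps the number of nodes that must carry jump pointers at $O(n/\log n)$, keeping their total cost $O(n)$, while non-branching macro stretches are single long paths absorbed by their ladders and the micro trees are handled by precomputed tables. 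The points I expect to be most delicate are: (i) choosing the size threshold so that the number of distinct micro-tree shapes is $2^{O(\log n)}$, small enough that a per-shape table of all internal level-ancestor answers occupies $o(n)$ space and is built in $O(n)$ time overall; (ii) encoding each micro tree in $O(\log n)$ bits, i.e.\ $O(1)$ machine words on the word RAM, so that its table is located in constant time; and (iii) checking that every query is dispatched in $O(1)$ to exactly one of the three mechanisms --- a micro-table lookup, a single jump pointer, or a single ladder --- which is where the bookkeeping of the decomposition really has to be pinned down. Assembling these pieces yields the stated $O(n)$ preprocessing time and space with $O(1)$ query time.
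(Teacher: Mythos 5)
The paper itself gives no proof of this lemma: it is invoked as a known black box, with the proof deferred to the two cited references (Berkman--Vishkin and Bender--Farach-Colton). Your sketch is a reconstruction of the Bender--Farach-Colton proof, i.e.\ exactly the intended source, and its first two stages are correct: the ladder of $v$ does cover at least $\mathrm{height}(v)$ ancestors of $v$, the top of the ladder at least doubles the height, and following the jump pointer for the largest $2^k \le i$ lands at a node $u$ with $\mathrm{height}(u) \ge 2^k > i - 2^k$, so one ladder lookup finishes the query.

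The genuine gap is at the point you yourself flag in (iii): the dispatch of a query $\mathrm{LA}(v,i)$ when $v$ is a \emph{macro node that is not a jump node}. Your claim that such ``non-branching macro stretches are single long paths absorbed by their ladders'' is false in general, and none of your three mechanisms then applies: jump pointers live only at the $O(n/\log n)$ macro leaves, so $v$ has none; micro tables only cover micro trees; and $v$'s ladder covers only about twice the length of the long path through $v$, which can be far shorter than $\mathrm{depth}(v)$. Concretely, in a complete binary tree a right child $v$ of height about $\log\log n$ is macro (it has $\Theta(\log n)$ descendants), has macro children (so it is not a macro leaf), and is the top of its own long path, so its ladder covers only $O(\log\log n)$ ancestors; the query $\mathrm{LA}(v,\mathrm{depth}(v))$ is then unanswerable in $O(1)$ by your scheme. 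The missing idea from Bender--Farach-Colton is the reduction \emph{downward}: since the macro nodes are ancestor-closed, every macro node $v$ has some jump node $w$ in its subtree; store at each macro node a pointer to one such $w$ together with $d=\mathrm{depth}(w)-\mathrm{depth}(v)$ (one word per node, computable by a single DFS). Then $\mathrm{LA}(v,i)=\mathrm{LA}(w,i+d)$, answered by $w$'s jump pointer plus one ladder lookup; a micro node whose answer lies outside its micro tree reduces to this case through the (macro) parent of its micro-tree root. Two smaller inaccuracies: ``$2^{O(\log n)}$ shapes'' is not by itself small enough (that is $n^{O(1)}$) --- with threshold $\frac14\log n$ and a balanced-parentheses encoding you get at most $2^{\log n/2}=\sqrt{n}$ shapes, which is what makes the tables $o(n)$; and you should also check that the jump pointers can be \emph{built} in $O(n)$ total time, e.g.\ by walking the $O(\log n)$ ladder hops from each jump node once and reading off all of its $\log n$ pointers along the way.
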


\section{Algorithms}

In this section, we propose algorithms to compute 
\emph{LZ78 factorizations}~\cite{LZ78} and \emph{position heaps}~\cite{ehrenfeucht_position_heaps_2011} which run in linear time for an integer alphabet.

\subsection{Computing LZ78 trie from suffix tree}

The LZ78 factorization~\cite{LZ78} of a string $w$ is a sequence  
$f_1, \ldots, f_m$ of non-empty substrings of $w$,
where $f_1 \cdots f_m = w$, 
and each $f_i$ is the longest prefix of $w[|f_1 \cdots f_{i-1}|+1 \ldots n]$ such that
$f_i \in \{f_jc \mid 1 \leq j < i, c\in\Sigma \}\cup\Sigma$.
Each $f_i$ is called an LZ78 factor of $w$.
The dictionary of LZ78 factors of a string $w$ can be represented by the following trie,
called the \emph{LZ78 trie} of $w$.
\begin{definition}
The LZ78 trie of string $w$, denoted $\LZTrie(w)$,
is a rooted tree such that each node represents an LZ78 factor $f_i$,
and there is an edge $(f_j, c, f_i)$ with label $c \in \Sigma$
iff $f_i = f_j c$.
\end{definition}

\begin{figure}[tb]
\centering{
\includegraphics[scale=0.5]{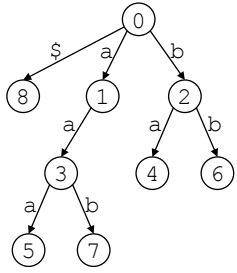}
}
\caption{The LZ78 trie of $\mathtt{abaabaaaabbaab\$}$.
  Each node numbered $i$ represents the factor $f_i$ of the LZ78 factorization,
  where $f_i$ is the path label from the root to the node, 
  e.g.: $f_3 = \mathtt{aa}$, $f_7 = \mathtt{aab}$.
}
\label{fig:lz78dict}
\end{figure}

See Figure~\ref{fig:lz78dict} for an example of $\LZTrie(w)$.
$\LZTrie(w)$ requires $O(m)$ space,
where $m$ is the number of factors in the LZ78 factorization of $w$.
Each factor $f_i$ can be computed in $O(|f_i|)$ time from the trie,
by starting from node $f_i$ and concatenating edge labels between $f_i$ and the root.
We compute $\LZTrie(w)$ as a compact representation of the LZ78 factorization of $w$.

We present an $O(n)$-time algorithm to compute LZ78 trie of string $w$ of length $n$
over an integer alphabet, via the suffix tree of $w$.
In so doing, we make use of the following key observation:
since $\LZTrie(w)$ is a trie whose nodes are all substrings of $w$,
it can be \emph{superimposed} on $\SuffixTree(w)$, and be completely contained in it,
with the exception that some nodes of the trie may correspond to implicit nodes 
of the suffix tree. 
See Figure~\ref{fig:stree_lz} for an example.
\begin{figure}[tbh]
  \centerline{
    \includegraphics[width=0.7\textwidth]{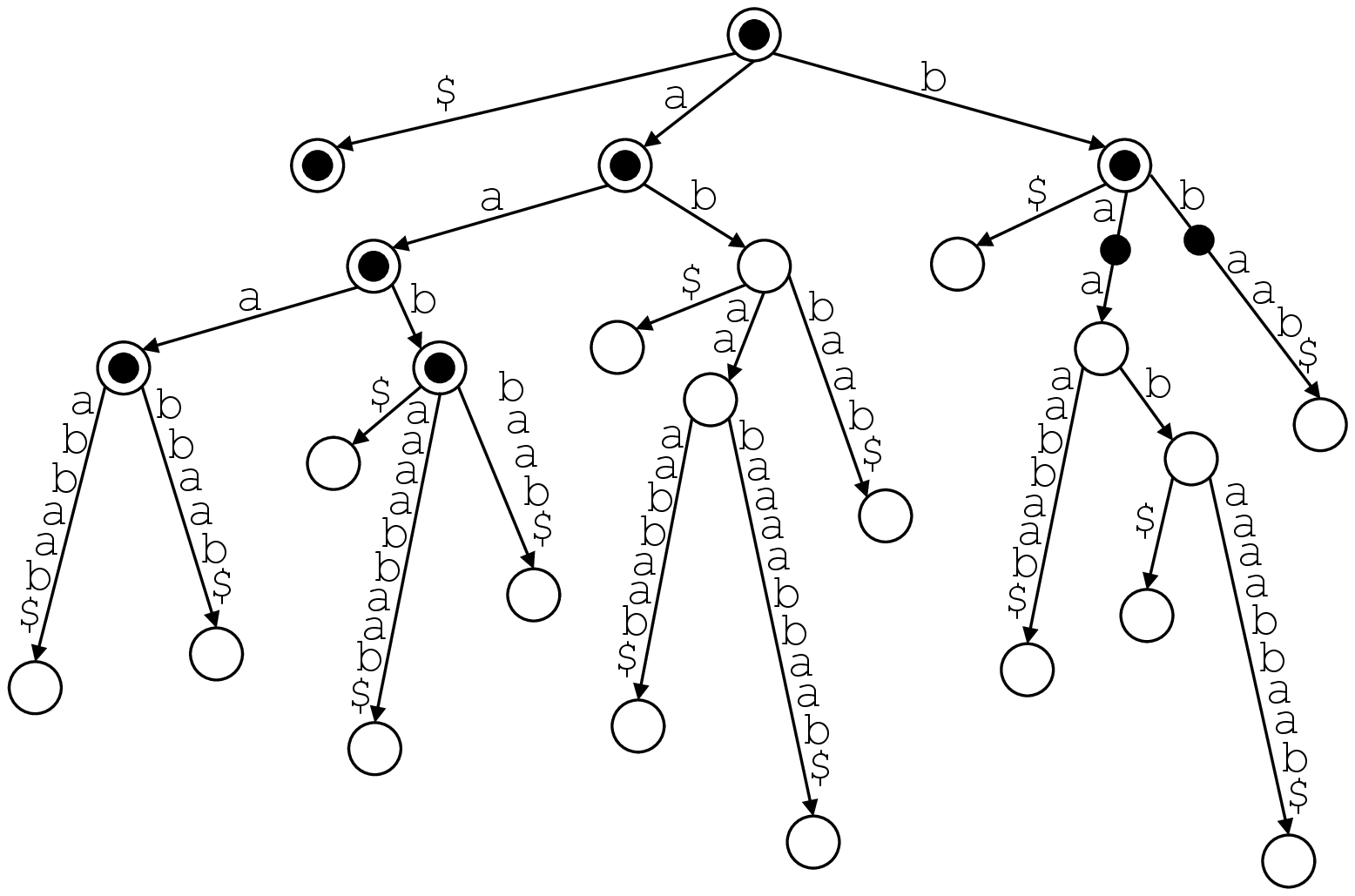}
  }
  \caption{
    The LZ78-trie of string $w = \mathtt{abaabaaaabbaab\$}$,
    superimposed on the suffix tree of $w$.
    The subtree consisting of the dark nodes is the LZ78-trie,
    derived from the LZ78-factorization:
    $\mathtt{a},\mathtt{b},\mathtt{aa},\mathtt{ba},\mathtt{aaa},\mathtt{bb},\mathtt{aab},\mathtt{\$}$,
    of $w$.
  }
  \label{fig:stree_lz}
\end{figure}

\begin{theorem} \label{theo:LZ78trie}
Given a string $w$ of length $n$ over an integer alphabet,
$\LZTrie(w)$ can be constructed in $O(n)$ time and $O(n)$ working space.
\end{theorem}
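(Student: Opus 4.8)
The plan is to build $\SuffixTree(w)$ and superimpose $\LZTrie(w)$ on it, exploiting the observation noted above that every LZ78 factor is a substring of $w$ and hence corresponds to an (explicit or implicit) locus of the suffix tree. The guiding fact is that, for a position $p$, the ancestors of the leaf $z_p$ spelling the suffix $w[p..n]$ are exactly the loci of the prefixes of $w[p..n]$; so if I mark precisely those loci that are already LZ78 factors, then the deepest marked ancestor of $z_p$ is the longest previous factor (or the root, i.e.\ the empty word) that is a prefix of $w[p..n]$. By the greedy definition of the factorization this is exactly the factor $f_j$ that the next factor extends, and $f_i = f_j\, w[p+|f_j|]$. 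Thus a single nearest-marked-ancestor query produces the node to extend, and I only have to create and mark the new locus one character below it.

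Concretely, I would first construct $\SuffixTree(w)$ in $O(n)$ time by the linear-time construction for integer alphabets stated above, precompute the string depth of every explicit node, record at each leaf the starting position of its suffix (yielding in $O(n)$ time a map $p \mapsto z_p$), preprocess the tree for level-ancestor queries by Lemma~\ref{lem:laq}, and initialize the semi-dynamic nearest-marked-ancestor structure of Lemma~\ref{lem:nma} on the same tree with only the root marked. I then process factors left to right, maintaining a cursor $p$ initialized to $1$. For each factor I (i) look up $z_p$; (ii) set $v \gets \mathrm{NMA}(z_p)$ and $d \gets |\str(v)|$, so $f_j=\str(v)$ is the factor to extend; (iii) locate the locus $x$ at string depth $d+1$ on the path from $v$ toward $z_p$, making it explicit and inserting it into the nearest-marked-ancestor structure if it is implicit; (iv) mark $x$ and record an $\LZTrie(w)$ edge from $v$ to $x$ labeled $w[p+d]$; and (v) advance $p \gets p+d+1$. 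The root handles first occurrences of single characters, for which $v$ is the root and $d=0$.

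The one delicate point is step (iii): finding the child of $v$ toward $z_p$ and the locus exactly one character below $v$, in $O(1)$ and consistently, given that I insert nodes over time while Lemma~\ref{lem:laq} requires a \emph{static} tree. I would keep the original suffix tree fixed for level-ancestor queries, maintain the inserted nodes only in the semi-dynamic structure, and link each inserted node to the lower explicit endpoint of the original edge it subdivides. If $v$ is an original explicit node at tree depth $\delta_v$ (number of edges from the root), the child of $v$ toward $z_p$ is the original ancestor $u$ of $z_p$ at tree depth $\delta_v+1$, obtained by one level-ancestor query; if $v$ is an inserted node, its stored lower endpoint plays the role of $u$. Because every LZ78 factor (including every node I have inserted) is marked and $v$ is the \emph{nearest} marked ancestor of $z_p$, no marked node lies strictly between $v$ and $z_p$, so the segment of the original edge from $v$ to $u$ carries no previously inserted node; hence the locus at string depth $d+1$ is either $u$ itself (when $|\str(u)| = d+1$) or a fresh implicit point strictly between $v$ and $u$, which I subdivide and insert. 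Since only one character is added, the target depth $d+1$ always falls on this single clean original edge, so the whole step is $O(1)$.

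Correctness then follows because the marked nodes are maintained to be exactly the factors produced so far: the new factor $f_i=f_j\, w[p+d]$ is necessarily new, for otherwise it would itself be a deeper marked ancestor of $z_p$, contradicting the choice of $v$. For the running time, the suffix-tree construction, the level-ancestor preprocessing, the string-depth and position-to-leaf tables, and the initialization of the nearest-marked-ancestor structure all cost $O(n)$; the main loop performs $m\le n$ iterations, each using one level-ancestor query in $O(1)$ and $O(1)$ nearest-marked-ancestor operations, which are amortized $O(1)$ by Lemma~\ref{lem:nma}. Every structure uses linear space, giving $O(n)$ time and $O(n)$ working space. I expect the main obstacle to be precisely the reconciliation described in the third paragraph, namely arguing that navigation can always be carried out on the static tree in constant time despite the running sequence of edge subdivisions, which rests on the invariant that all inserted nodes are marked and therefore cannot intervene between a nearest marked ancestor and its descendant leaf.
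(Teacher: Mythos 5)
Your proposal is correct and follows essentially the same route as the paper: superimpose the LZ78 trie on $\SuffixTree(w)$, find each longest previous factor via a semi-dynamic nearest-marked-ancestor query (Lemma~\ref{lem:nma}) on the leaf of the current suffix, and descend one character using a level-ancestor query (Lemma~\ref{lem:laq}) answered on the \emph{original} static suffix tree. If anything, your third paragraph makes explicit the invariant (all inserted nodes are marked, so none can lie strictly between the nearest marked ancestor and the leaf) that the paper only gestures at with its remark that level-ancestor queries are needed only at branching nodes of the original tree.
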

\begin{proof}
Suppose the LZ78 factorization 
$f_1\cdots f_{i-1}$, up to position $p - 1 = |f_1\cdots f_{i-1}|$ of a given string $w$, has been computed,
and the nodes of the LZ78 trie for $f_1, \ldots, f_{i-1}$ have been added to $\SuffixTree(w)$.
Now, we wish to calculate the $i$th LZ78-factor $f_i$ starting at position $p$.
Let $z$ be the leaf of the suffix tree that corresponds to the suffix $w[p..n]$.
The longest previous factor $x$
that is a prefix of $w[p..n]$ corresponds to the longest path of the LZ78 trie built so far,
which represents a prefix of $w[p..n]$.
If we consider the suffix tree as a semi-dynamic tree, 
where the nodes corresponding to the superimposed LZ78-trie are
dynamically inserted and marked, the node $x$ we are looking for
is the {\em nearest marked ancestor} of $z$,
which can be computed in $O(1)$ time.
If $x$ is not branching,
then we we simply move down the edge by a single character (say $a$),
create a new node if necessary, and mark the node representing 
the $i$th LZ78 factor $f_i = xa$.  
If $x$ is branching,
then we can locate the out-going edge of $x$ that is in the path from $x$ to 
the leaf $z$ in $O(1)$ time by a level ancestor query from $z$.
Then we insert/mark the new node for the $i$th LZ78 factor $f_i$.
Technically, our suffix tree is semi-dynamic in that new
nodes are created since the LZ78-trie is superimposed. 
However, since we are only interested in level ancestor queries at
branching nodes, we only need to answer them for the original suffix
tree. Therefore, we can preprocess the tree in $O(n)$ time and
space to answer the level ancestor queries in $O(1)$ time.
Finally, we obtain the LZ78-trie by removing the unmarked nodes from the provided suffix tree.
\end{proof}

\subsection{Computing position heap from suffix tree}

Here, we show how to compute 
the position heap of a set of strings from the corresponding suffix tree.
We begin with the definition of position heaps.

Let $W = \{w_1 \$, w_2 \$, \ldots, w_k \$\}$ be a set of strings 
such that $w_i \$ \notin \Suffix(w_j \$)$ for any $1 \leq i \neq j \leq k$.
Define the total order $\prec$ on $\Sigma^*$ by 
$x \prec y$ iff either $|x| < |y|$ 
or $|x| = |y|$ and $\rev{x}$ is lexicographically smaller than $\rev{y}$.
Let $\OSuffix(W)$ be the sequence of strings 
in $\Suffix(W)$ that are ordered w.r.t. $\prec$
and let $\ell = |\OSuffix(W)|$.
For any $1 \leq i \leq \ell$,
let $s_i$ denote the $i$th suffix of $\OSuffix(W)$.

\begin{definition}[Position heaps for multiple strings] \label{def:position_heap}
The \emph{position heap} for a set $W$ of strings, 
denoted $\PH(W)$, is the trie heap defined as follows:
\begin{enumerate}
  \item each edge is labeled with a character in $\Sigma$;
  \item any two out-going edges of any node are labeled with distinct characters;
  \item the root node is labeled with $1$ and the other nodes are labeled with
        integers from $1$ to $\ell$ such that parents' labels are smaller than 
        their children's labels;
  \item the path from the root to node labeled with $i~(1 \leq i \leq \ell)$ 
        is a prefix of $s_i$.
\end{enumerate}
\end{definition}

\begin{figure}[tb]
\centering{
\includegraphics[scale=0.5]{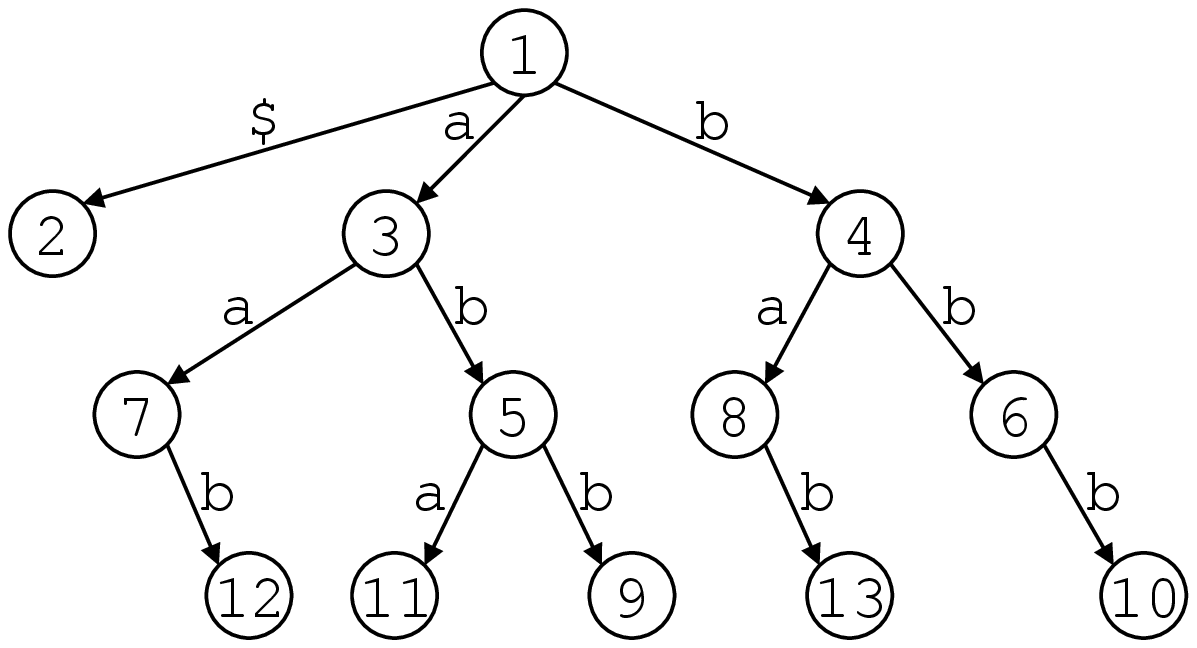}
}
\caption{$\PH(W)$ for
  $W = \{\mathtt{aaba\$},\mathtt{bbba\$},\mathtt{ababa\$},\mathtt{aabba\$},\mathtt{babba\$}\}$,
  where $\Suffix_{\prec}(W) =$ $\langle \varepsilon,$
  $\mathtt{\$},$
  $\mathtt{a\$},$
  $\mathtt{ba\$},$ 
  $\mathtt{aba\$},$
  $\mathtt{bba\$},$
  $\mathtt{aaba\$},$
  $\mathtt{baba\$},$
  $\mathtt{abba\$},$
  $\mathtt{bbba\$},$
  $\mathtt{ababa\$},$
  $\mathtt{aabba\$},$
  $\mathtt{babba\$} \rangle$.
The node labeled with integer $i$ corresponds to $s_i$.}
\label{fig:PH}
\end{figure}

Notice that $\PH(W)$ can be obtained 
by inserting, into a trie, the strings in $\OSuffix(W)$ in increasing order w.r.t. $\prec$.
In this paper, we assume that $\CST(W)$ is given as input,
but $\OSuffix(W)$ is not explicitly given.
For each $s_i$ in $\OSuffix(W)$, let $\id(s_i) = i$.
We would like to know $\id(s)$ for all suffixes $s$ represented by $\CST(W)$,
which gives us the ordering of strings in $\CST(W)$ w.r.t. $\OSuffix(W)$.
\begin{lemma}
$\id(s)$ for all nodes $s$ in $\CST(W)$ can be computed in $O(\ell)$ time.
\end{lemma}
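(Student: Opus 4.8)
The plan is to compute $\id(s)$ for every node $s$ of $\CST(W)$ by sorting all $\ell$ suffixes according to the order $\prec$ without ever materializing the strings themselves. Recall that $\prec$ orders strings first by length, then by the reversed-lexicographic order of equal-length strings. The crucial observation is that $\CST(W)$ is a \emph{reversed} trie: the string represented by a node $u$ is read from $u$ \emph{up} to the root, so the depth of $u$ in $\CST(W)$ equals the length of the suffix it represents. This means the primary sort key, namely string length, is simply the depth of each node in $\CST(W)$, which I can obtain for all nodes in $O(\ell)$ time by a single traversal.

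For the secondary key I would exploit the tree structure directly. For two nodes $u$ and $v$ of the same depth $d$, comparing $\rev{\str(u)}$ against $\rev{\str(v)}$ lexicographically means comparing the character sequences read \emph{from the root downward} toward $u$ and $v$. Since $\CST(W)$ is a reversed trie in which incoming edges of any node carry distinct labels (property~2 of $\CST(W)$), a node at depth $d$ is uniquely determined by this root-to-node label sequence, and two such sequences compare exactly as the ordinary lexicographic order on a path from the root. Therefore the correct secondary order among the depth-$d$ nodes is precisely the order in which they are visited by a depth-first traversal of $\CST(W)$ that, at each node, descends into children in increasing order of the \emph{incoming} edge label. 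Concretely, I would first bucket all nodes by their depth, and then within each depth bucket order the nodes consistently with such an edge-label-ordered DFS of $\CST(W)$.

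The main technical step is producing, in $O(\ell)$ total time, a traversal of $\CST(W)$ in which each node's children are visited in increasing label order, despite the alphabet being of size $\sigma \in O(n)$. A naive per-node sort of children by label costs $O(\ell \log \sigma)$, which is too slow. To avoid this, I would perform a single global radix/bucket sort: collect every edge $(u,c)$ over the whole trie, bucket-sort these $O(\ell)$ edges by the character $c \in \{1,\dots,\sigma\}$ using an array of size $\sigma$, and from the sorted edge list build, for each node, its children-list already arranged in increasing label order. Since $\sigma \in O(n) = O(\ell \cdot k)$ is within the admissible range and the number of edges is $O(\ell)$, this bucket sort runs in $O(\ell + \sigma) = O(\ell)$ time. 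With the children-lists prearranged, a single DFS then visits the depth-$d$ nodes in exactly the required $\prec$-order within each depth class.

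Finally I would combine the two keys. Stepping through depths $d = 0, 1, 2, \ldots$ (shortest suffix $\varepsilon$ first), and within each depth emitting nodes in the DFS-induced label order just described, yields all $\ell$ suffixes in exactly the global $\prec$-order; assigning consecutive integers $1, 2, \ldots, \ell$ to the nodes in this order gives $\id(s)$ for every node. The hard part is really the argument that the within-depth DFS order coincides with reversed-lexicographic order on equal-length strings, together with the careful bucket-sort bookkeeping that keeps the whole procedure at $O(\ell)$ rather than $O(\ell \log \sigma)$; the length-based primary ordering and the final labeling are then routine.
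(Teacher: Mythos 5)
Your proposal is correct, but it takes a genuinely different route from the paper. The paper's proof is two lines resting on a heavy black box: it first builds the suffix array of $\CST(W)$ in $O(\ell)$ time using the algorithm of Ferragina et al., which supplies the lexicographic tie-breaking key, and then performs a stable bucket sort of the nodes by string length (an array of at most $\ell$ buckets indexed by $|s|$, filled by scanning the suffix array in order), which imposes the primary key. You dispense with the suffix-array machinery entirely: you observe that the primary key (length) is just node depth in $\CST(W)$, and that the secondary key (lexicographic order of $\rev{s}$) is exactly the order in which a pre-order DFS visits the depth-$d$ nodes when children are explored in increasing edge-label order, the label-sorted adjacency lists being prepared by one global bucket sort of the $O(\ell)$ edges. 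Your route is more elementary and self-contained, and it has the additional virtue of making explicit a point the paper glosses over: the string at a node of $\CST(W)$ is read \emph{upward}, so its reversal is the root-to-node path, and it is these downward paths that must be compared lexicographically. (One can check against the paper's example of $\OSuffix(W)$ that this is indeed the intended tie-break: $\mathtt{baba\$}$ precedes $\mathtt{abba\$}$, which is the reversed-lexicographic order, not the plain lexicographic order of the suffixes.) What the paper's approach buys in exchange is brevity and indifference to how the tie-break is realized, since the suffix array is taken off the shelf.

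One slip in your write-up deserves correction. You justify the $O(\ell+\sigma)=O(\ell)$ bound for the edge bucket sort by ``$\sigma \in O(n) = O(\ell \cdot k)$,'' but that premise does not yield the conclusion: if $\sigma = \Theta(\ell k)$ with $k$ large, then $O(\ell + \sigma)$ is not $O(\ell)$. What you need is $\sigma \in O(\ell)$, or, for an integer alphabet of size polynomial in $\ell$, a constant number of radix-sort passes over the edge labels, which still totals $O(\ell)$ time. This is the same kind of integer-alphabet assumption already required by the linear-time constructions the paper invokes (Ferragina et al., and Shibuya for the generalized suffix tree), so it is a fixable piece of bookkeeping rather than a hole in your approach, but as stated the arithmetic is wrong.
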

\begin{proof}
We firstly construct the suffix array of $\CST(W)$ in $O(\ell)$ time,
using the algorithm proposed by Ferragina et al.~\cite{FerraginaLMM09}.
This gives us the lexicographical order of the suffixes represented by $\CST(W)$.
Secondly, we bucket-sort the nodes of $\CST(W)$:
we use an array of size $x$ as buckets, 
where $x \leq \ell$ is the length of the longest string in $\CST(W)$.
We then scan the suffix array from the beginning to the end, 
and insert each node (string) $s$ into the $|s|$th bucket (entry) of the array. 
This gives us $\id(s)$ for all nodes $s$ in $\CST(W)$ in $O(\ell)$ time.
\end{proof}

For any $1 \leq i \leq \ell$, where $\ell$ is the number of nodes of $\CST(W)$,
let $\CST(W)^{i}$ denote the subtree of $\CST(W)$ 
consisting only of the nodes $s_j$ with $1 \leq j \leq i$.
$\PH(W)^{i}$ is the position heap for $\CST(W)^i$ for each $1 \leq i \leq \ell$,
and in our algorithm which follows, 
we construct $\PH(W)$ incrementally, in increasing order of $i$.

We present an $O(\ell)$-time algorithm to compute $\PH(W)$ from 
the generalized suffix tree for $\CST(W)$ with $\ell$ nodes.
Since $\PH(W)$ is a trie where each node represents some substring of the strings in $W$,
it can be \emph{superimposed} on the generalized suffix tree of $W$
which is equivalent to the suffix tree of $\CST(W)$,
and be completely contained in it,
with the exception that some nodes of the trie may correspond to implicit nodes 
of the suffix tree. 
See Figure~\ref{fig:stree_ph} for an example of $\PH(W)$ superimposed to
the suffix tree of $\CST(W)$.
We summarize our algorithm as follows.

\begin{figure}[tb]
  \centerline{
    \includegraphics[width=0.7\textwidth]{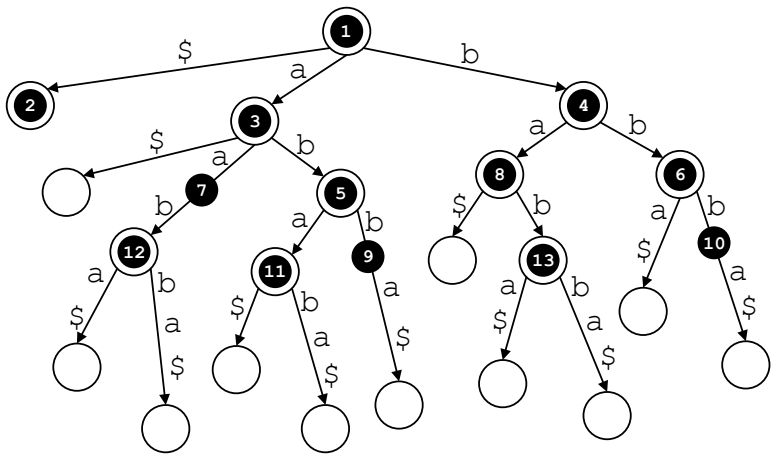}
  }
  \caption{
    The position heap of set $W = \{\mathtt{aaba\$},\mathtt{bbba\$},\mathtt{ababa\$},\mathtt{aabba\$},\mathtt{babba\$}\}$,
    superimposed on the generalized suffix tree of $W$,
    which is equivalent to the suffix tree of $\CST(W)$.
    The subtree consisting of the dark nodes is the position heap of $W$.
  }
  \label{fig:stree_ph}
\end{figure}

\begin{theorem}
Given $\CST(W)$ with $\ell$ nodes representing a set $W$ of strings over an integer alphabet,
$\PH(W)$ can be constructed in $O(\ell)$ time and $O(\ell)$ working space.
\end{theorem}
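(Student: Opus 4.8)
The plan is to mirror the LZ78 construction almost verbatim, exploiting the structural analogy already set up in the excerpt. Just as $\LZTrie(w)$ superimposes on $\SuffixTree(w)$, the position heap $\PH(W)$ superimposes on the suffix tree of $\CST(W)$, so I would build $\PH(W)$ incrementally by inserting the suffixes $s_1, s_2, \ldots, s_\ell$ in increasing $\prec$-order and marking the corresponding nodes on the suffix tree as they are created. By the preceding lemma, $\id(s)$ for every node $s$ of $\CST(W)$ is computed in $O(\ell)$ time, which yields precisely this $\prec$-ordering; and by the cited theorem of Shibuya, the suffix tree of $\CST(W)$ is itself built in $O(\ell)$ time. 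So the preprocessing fits the budget and I may assume both the ordering and the suffix tree are available.

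The heart of the argument is the insertion step, which I would justify by the defining property that the path to the node labeled $i$ is the longest prefix of $s_i$ already present in $\PH(W)^{i-1}$, extended by one character. Concretely, when processing $s_i$, I locate the leaf $z$ of the suffix tree corresponding to $s_i$ (each $s_i$ is a suffix represented by a node of $\CST(W)$, hence corresponds to a leaf of the generalized suffix tree). Treating the suffix tree as a semi-dynamic tree in which the superimposed position-heap nodes are inserted and marked, the deepest already-inserted prefix of $s_i$ is exactly the \emph{nearest marked ancestor} of $z$, computable in $O(1)$ amortized time by Lemma~\ref{lem:nma}. I then descend one character toward $z$ — using a level ancestor query from $z$ (Lemma~\ref{lem:laq}) to identify the correct out-going edge when the marked ancestor is an explicit branching node, and a simple single-character step otherwise — creating a new (possibly implicit) node if necessary, and mark it as the node for $s_i$ with label $i$. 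The heap property on labels follows because parents are inserted at strictly earlier $\prec$-steps than their children, and the $\prec$-order is monotone in length, so a parent's label is always smaller.

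For the complexity, I would argue that each of the $\ell$ insertions costs amortized $O(1)$: one nearest-marked-ancestor query, one level-ancestor query, and $O(1)$ node creation/marking. As in the LZ78 proof, the level ancestor queries are only ever issued at original branching nodes of the static suffix tree, so the $O(\ell)$-time, $O(\ell)$-space preprocessing of Lemma~\ref{lem:laq} on the original tree suffices even though the superimposition adds nodes dynamically. Summing over all $i$ gives $O(\ell)$ total time, and since the suffix tree, the NMA structure, the LA structure, and the $\id$ array are each $O(\ell)$ space, the working space is $O(\ell)$. Finally, I would note that $\PH(W)$ is recovered by discarding the unmarked suffix-tree nodes, exactly mirroring the last line of the LZ78 proof.

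The step I expect to be the main obstacle — and the one I would spend the most care on — is verifying that the nearest-marked-ancestor-plus-one-character rule genuinely reproduces Definition~\ref{def:position_heap}, in particular property~4 (the path to node $i$ is a prefix of $s_i$) together with the claim that this prefix is the \emph{longest} one available in $\PH(W)^{i-1}$. This requires an inductive argument showing that $\PH(W)^{i-1}$, as superimposed, marks exactly the nodes $\{s_1,\ldots,s_{i-1}\}$ and that the marked ancestors of $z$ correspond bijectively to prefixes of $s_i$ already in the heap, so that the \emph{nearest} marked ancestor is the \emph{longest} such prefix. The subtlety, absent in the LZ78 case, is that the $\prec$-ordering (length-first, then reversed-lexicographic) must be reconciled with the fact that we are inserting suffixes rather than building a factorization greedily; I would confirm that length-monotonicity of $\prec$ is exactly what guarantees every proper prefix of $s_i$ that belongs to the heap was inserted at some earlier step $j < i$, closing the induction.
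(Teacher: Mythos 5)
Your proposal is correct and follows essentially the same route as the paper: the paper's proof likewise superimposes $\PH(W)$ on the suffix tree of $\CST(W)$, locates the longest already-present prefix of $s_i$ via a semi-dynamic nearest marked ancestor query from the leaf for $s_i$, and defers the remaining details (level ancestor queries at branching nodes, marking the new node, final removal of unmarked nodes) to the LZ78 construction of Theorem~\ref{theo:LZ78trie}. Your write-up merely makes explicit what the paper compresses into ``the rest is analogous,'' including the length-monotonicity of $\prec$ that justifies the incremental insertion order.
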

\begin{proof}
Suppose we have computed the position heap $\PH(W)^{i-1}$
superimposed onto the suffix tree of $\CST(W)$,
and we wish to find the next node which corresponds to suffix $s_i$.
Let $z$ be the leaf of the suffix tree that corresponds to the suffix $s_i$.
The longest prefix of $s_i$ that is represented by $\PH(W)^{i-1}$ corresponds to
the longest path of $\PH(W)^{i-1}$, which represents a prefix of $s_i$.
Therefore, this can be found by a semi-dynamic nearest marked ancestor query,
and the rest is analogous to the algorithm to compute the LZ78 trie
of Theorem~\ref{theo:LZ78trie}.
Finally, we obtain the position heap by removing 
the unmarked nodes from the provided suffix tree.
\end{proof}

\bibliographystyle{plain}
\bibliography{ref}
\end{document}